\documentclass[journal]{IEEEtran}
\usepackage{amsmath,amssymb,amsthm}
\usepackage{graphicx}
\usepackage{booktabs}
\usepackage[hidelinks]{hyperref}

\theoremstyle{plain}
\newtheorem{theorem}{Theorem}
\newtheorem{proposition}{Proposition}
\newtheorem{corollary}{Corollary}
\theoremstyle{definition}
\newtheorem{definition}{Definition}
\newtheorem{assumption}{Assumption}
\newtheorem{remark}{Remark}

\newcommand{\E}{\mathbb{E}}
\newcommand{\Prob}{\mathbb{P}}
\newcommand{\Sset}{\mathcal{S}}
\newcommand{\dalpha}{d_{\alpha}}

\title{Conformal Recovery-Deadline Certificates for Runtime Assurance of
Adapting Controllers}
\author{Alireza Shojaei\thanks{A. Shojaei is with the Myers-Lawson School of Construction, Virginia Tech, Blacksburg, VA 24061 USA (e-mail: shojaei@vt.edu).}}

\begin{document}
\maketitle

\begin{abstract}
Runtime assurance (RTA) protects a safety-critical system by switching from an advanced
controller to a verified safe controller when a monitored condition is violated. The
standard \emph{latching} rule, which trips on the first breach of the safe set and then
coasts, is correct for a \emph{diverging} controller but pathological for a \emph{capable
online-adapting} one. Such a controller is unsafe \emph{by design} during a bounded
recovery transient (it must excite the plant to identify the fault before it can correct
it), so a latching shield trips on that transient and suppresses a controller that would
have recovered. We introduce the \emph{conformal recovery-deadline certificate}, a
split-conformal, distribution-free, finite-sample upper bound on the adapting controller's
recovery time that licenses delaying fallback with a coverage guarantee, backstopped by a
verified monitor at a hard critical limit. The certified deadline discriminates capable
from incapable controllers, keeping the recoverer autonomous while catching the diverger,
and the construction cleanly separates \emph{autonomy} (governed by the statistical
coverage) from \emph{safety} (governed by the verified backstop), an instance of
\emph{reliability-asymmetric} design. We prove marginal coverage, a weighted extension
that restores coverage under a known fault-distribution shift, and group-conditional
(Mondrian) coverage; we demonstrate all three on two unrelated Simplex testbeds, a 6-DOF
spacecraft attitude controller and a torque-controlled inverted pendulum. Both show the same
suppression pathology and the same cure, which is what makes the certificate a
domain-general \emph{mechanism} rather than a single-system trick.
\end{abstract}

\begin{IEEEkeywords}
runtime assurance, conformal prediction, fault recovery, Simplex architecture, distribution-free coverage, adaptive control
\end{IEEEkeywords}

\section{Introduction}
\label{sec:intro}
Runtime assurance buys safety for an unverifiable controller by watching a monitored
condition and switching to a verified fallback the instant that condition is breached. The
arrangement has a tacit premise. It assumes that a controller in breach is a controller in
trouble, so that taking authority away can only help. For the controllers runtime assurance
was built around, fixed laws and policies that either hold the safe set or diverge out of
it, the premise is exactly right. It stops being right for a controller that \emph{adapts
online}, and that is the gap this paper addresses.

A controller that adapts online to an unmodeled fault, inferring a latent fault parameter
from the closed-loop response and correcting for it, can recover from failures a fixed
controller cannot. Adaptation, however, carries a cost the runtime-assurance premise was not
written for. To identify the fault, the adapting controller must first \emph{act under the
wrong model}. On a control-direction reversal it drives the wrong way, leaves the safe set,
reads the resulting motion, and only then corrects. The state change it induces is precisely
the measurement that exposes the fault sign, so the excursion is not a malfunction but the
controller's diagnostic probe. This excursion is a \emph{bounded recovery transient},
temporary and, for a capable controller, self-correcting, yet it is indistinguishable, at
the instant of the first safe-set breach, from the onset of genuine divergence. A monitor
that decides on the present state alone cannot tell the two apart, because at that instant
they look identical.

Classical runtime assurance (RTA) and the Simplex architecture~\cite{Sha2001,Hook2016}
resolve the ambiguity in the only way a present-state monitor can, namely conservatively.
Once a monitored condition (the safe-set boundary) is breached, they switch to the verified
safe controller and \emph{latch}, coasting on the fallback thereafter. For a diverging
controller this is exactly right, since a diverging controller only gets worse and authority
should leave it for good. For a capable adapting controller the same rule is a
\emph{suppression pathology}. The shield trips on the recovery transient and never returns
control, converting a controller that would have recovered into a permanent safe-hold.
Worse, the fallback's coast removes the very excitation the adapting controller needed to
identify the fault, so the shield does not merely fail to help, it actively dismantles the
recovery it was reacting to. The capable controller is penalized precisely for being
informative, and the more diagnostic its probe, the harder the latch punishes it. We
emphasize at the outset that this is not a flaw in any particular shield. It is a structural
consequence of pairing a present-state latch with a controller whose competence expresses
itself as a transient breach, and identifying it is the first contribution of this paper.

The resolution does not require a better monitor or a tamer controller. It requires deciding
\emph{on the right quantity}. The property that actually distinguishes a recovering
controller from a diverging one is not the present breach but the \emph{recovery time}, the
duration the controller spends outside the safe set before it returns and stays. A recovering
controller has a finite, well-behaved recovery time; a diverging one has an effectively
infinite recovery time. If a deadline could be placed on recovery time with a guarantee, the
shield could license the bounded transient up to that deadline and still catch genuine
divergence at it. We construct exactly such a deadline. We replace the latch, and the ad-hoc
``trip after a fixed margin'' heuristics that practitioners reach for instead, with a
split-conformal upper bound $\dalpha$ on the adapting controller's recovery time, calibrated
on held-out fault episodes, together with an engagement rule that keeps the controller until
$\dalpha$ or until a verified backstop fires at a hard critical limit, whichever comes first.
The bound is distribution-free and finite-sample under exchangeability~\cite{Vovk2005,Lei2018,Angelopoulos2023},
assuming nothing about the controller, the plant, or the shape of the recovery-time
distribution. Its coverage level $1-\alpha$ is a tunable knob that trades autonomy against
the cost of a late escalation, but never against safety, because the verified backstop is the
floor that no statistical choice can lower.

The construction has a structure worth naming, because it is what makes it safe to deploy.
Autonomy, the retention of the capable controller through its recovery, is governed by the
statistical certificate and is therefore as reliable as the exchangeability assumption
behind it. Safety, the avoidance of the critical set, is governed by the verified backstop
and is therefore as reliable as the formal proof behind it. The two are deliberately not the
same object and do not share a failure mode. A miscalibrated certificate costs autonomy and
nothing else; safety holds unconditionally. We call this \emph{reliability-asymmetric}
design, in which a verified element is allowed to bound an unverified one precisely because
their reliabilities are unequal, and the conformal recovery-deadline certificate is a clean
instance of it. The certificate is permitted to be wrong without endangering the vehicle,
which is exactly the license that lets a distribution-free statistical object sit inside a
safety-critical loop.

\subsection{Contributions}
\begin{enumerate}
\item We introduce the conformal recovery-deadline certificate and the recovery-aware
engagement rule (Sec.~\ref{sec:cert}), with a clean \emph{autonomy-vs-safety separation} in
which coverage governs autonomy while a verified backstop governs safety
(Prop.~\ref{prop:safety}). This is a concrete instance of reliability-asymmetric design, in
which the verified element bounds the unverified one, and it is what lets a distribution-free
statistical object license delayed fallback without ever touching the safety floor.
\item We prove the supporting theory, namely marginal coverage with right-censoring at the
horizon (Thm.~\ref{thm:marginal}), a weighted extension that restores coverage under a
\emph{known} fault-distribution shift (Thm.~\ref{thm:weighted}), and group-conditional
(Mondrian) coverage (Thm.~\ref{thm:mondrian}). The construction is honest by design. When the
controller's recovery rate falls below the requested $1-\alpha$, the deadline returns
$+\infty$ and the certificate \emph{refuses to certify} rather than fabricating a finite
deadline it cannot justify. The marginal-versus-conditional distinction under censoring is
stated explicitly and turned into a measured quantity, because precisely scoping a coverage
claim is what makes it trustworthy.
\item We demonstrate generality on two unrelated plants, exhibiting the same pathology and
the same cure on a 6-DOF spacecraft attitude controller and a torque-controlled inverted
pendulum (Sec.~\ref{sec:exp}), which elevates the certificate from a single-system fix to a
domain-general mechanism. On the pendulum we own the fault sampler, so the shift likelihood
ratio is \emph{exact}, and there we exhibit the weighted certificate restoring coverage that
the unweighted one provably loses, alongside a head-to-head against latching, heuristic, and
conformal-on-safety-value~\cite{Tabbara2025} baselines.
\end{enumerate}

\section{Related Work}
\label{sec:related}
The construction draws on two traditions that have matured largely apart, namely runtime
assurance for safety-critical control and distribution-free conformal inference. The
recurring theme of this section is that the ingredients exist on both sides but their
\emph{composition}, certifying a controller's recovery time and using that certificate as a
runtime engagement deadline, is what is new, and that the gap it fills is one the prior work
did not have to confront because it had not paired a verified shield with an online-adapting
controller.

\subsection{Runtime assurance and the Simplex architecture} The Simplex
architecture~\cite{Sha2001} established the template the field still uses. A complex,
high-performance controller is allowed to drive the plant only while a monitored safety
condition holds, and authority reverts to a simple verified safety controller the moment the
condition is violated. Run-time assurance (RTA) generalizes this into both an architectural
pattern and a certification practice~\cite{Hook2016,ASTMF3269}, and shielded reinforcement
learning~\cite{Alshiekh2018} imports the same switching logic into the training loop so that
an exploring agent can never commit an unsafe action. The design choice that concerns us is
what happens \emph{after} a switch. The standard rule is to \emph{latch}, holding the
verified controller once it engages, which is documented practice for spacecraft
operations~\cite{Dunlap2022}. Latching is the correct response to divergence and the wrong
response to a recoverable transient, and the Neural Simplex architecture~\cite{Phan2020}
already recognized half of this by adding \emph{reverse switching}, returning authority to
the advanced controller once the state is safe again. Reverse switching addresses
\emph{re-engagement after the fact}; our certificate addresses the prior question of
\emph{whether to disengage at all}, by licensing the bounded breach up front for a certified
duration rather than reacting to it and then recovering. The two are complementary, and on an
open-loop-unstable plant where a control-direction reversal disables the safe controller too,
reverse switching is unavailable for a reason we make precise in Sec.~\ref{sec:exp}.

\subsection{Safety filters and control barrier functions} A parallel tradition enforces
safety not by switching controllers but by minimally correcting the commanded action.
Control-barrier-function safety filters~\cite{Ames2019} project the nominal command onto the
set of inputs that keep a barrier certificate non-negative, rendering a designated safe set
forward-invariant. This is a powerful and now-standard guarantee, and it is also an
instructive contrast for the present setting. A barrier filter is built to keep the state
\emph{inside} the safe set at all times, which is precisely the behavior an online-adapting
controller cannot exhibit during fault identification, because its diagnostic probe requires
leaving the safe set transiently. A filter that forbids the excursion forbids the
identification, so it suppresses the capable controller for the same structural reason a
latch does. Our backstop therefore plays the barrier role only at the \emph{critical} limit,
where the safe controller must take over, and we deliberately leave the warning-to-critical
recovery zone open for the adapting controller to work in. The barrier guarantee and the
recovery-deadline guarantee certify different objects, set forward-invariance versus
return-time, and are best read as addressing different layers of the same problem.

\subsection{Conformal prediction} Conformal prediction supplies the coverage machinery. The
formal foundation is the framework of Vovk, Gammerman, and Shafer~\cite{Vovk2005}, which
turns exchangeability into finite-sample, distribution-free prediction sets, and the
split-conformal (inductive) variant we use, with its modern treatment for
regression~\cite{Lei2018} and accessible exposition~\cite{Angelopoulos2023}, trades a small
amount of statistical efficiency for a single held-out calibration pass and a deployment-time
cost of one quantile lookup. Two extensions are central here. Weighted conformal
prediction~\cite{Tibshirani2019} restores coverage under a known covariate shift by
reweighting the calibration distribution, which is exactly the tool needed when the
deployment fault mix differs from the calibration mix. Mondrian (group-conditional) conformal
prediction~\cite{Vovk2005} restores per-class coverage by calibrating within a priori
classes, which is what a marginal guarantee over a fault mixture cannot provide on its own.
We use these as machinery rather than claim them as novelty; the contribution is the runtime
\emph{object} they are applied to.

\subsection{Conformal runtime assurance and the recovery-time gap} The closest prior work
makes the switching decision itself statistical. Recent methods gate fallback with conformal
prediction on a learned \emph{safety value}~\cite{Tabbara2025} or adapt the shield's
parameters online through hidden-parameter inference with conformal bounds~\cite{Kwon2025}.
These certify a safety \emph{value} or a shield parameter in order to decide \emph{whether}
to fall back at the current instant. Our certified object is different, and the difference is
the point. We bound the controller's \emph{recovery time} and use that bound as the
engagement deadline, leaving the verified monitor unchanged. For a diverging controller the
useful question is whether it is currently in breach, which a value gate answers well. For an
online-adapting controller the useful question is whether it will \emph{return in time}, and
that is a property of the recovery-time distribution, not of the instantaneous safety value,
because the adapting controller is unsafe by design at the present instant and safe in
expectation over the transient. Casting recovery time as a horizon-censored last-exit time
(Sec.~\ref{sec:setup}), certifying it distribution-free,
and separating the resulting autonomy guarantee from the verified safety floor is, to our
knowledge, a combination the runtime-assurance literature has not previously assembled. The
two-domain demonstration that closes the paper is what shows it to be a mechanism rather than
a coincidence of one plant.

\section{Problem Setup}
\label{sec:setup}
We now make the setting precise. A system runs an \emph{advanced} controller $\pi$ (here, an
online-adapting fault-tolerant controller) under a Simplex-style shield with access to a
verified \emph{safe} controller
and a monitored \emph{safe set} $\Sset$ (e.g.\ pointing error within a warning threshold).
A \emph{fault episode} begins at $t=0$; the trajectory may leave $\Sset$ transiently while
$\pi$ identifies the fault. Let $e_t \ge 0$ be a scalar safety signal (distance outside
$\Sset$; $e_t=0$ inside), observed at discrete steps $t=0,\dots,H$ over a finite horizon
$H$. A larger \emph{critical} set boundary $\Phi_c$ marks states from which safety is lost;
a \emph{verified backstop} monitors $e_t \ge \Phi_c$ and is sound by construction (in our
testbeds, a Kind~2-checked monitor~\cite{Champion2016}).

The single quantity the certificate is built around is how long the trajectory spends in
breach before it returns and \emph{stays}. We define recovery time as a last-exit time rather
than a first-return time deliberately, because what matters operationally is not the first
moment the controller dips back into the safe set, which an oscillating transient may do
repeatedly, but the moment after which it never leaves again within the horizon. A
controller that touches the safe set and then exits once more has not recovered, and the
last-exit definition is what captures that correctly.

\begin{definition}[Recovery time]
\label{def:recovery}
The recovery time of an episode with safety trace $(e_0,\dots,e_H)$ is
\[
  \tau \;=\;
  \begin{cases}
    0, & \text{if } e_t \in \Sset \ \forall t \le H \ (\text{never breaches}),\\
    \ell+1, & \text{if } \ell := \max\{t \le H : e_t \notin \Sset\} < H,\\
    +\infty, & \text{if } e_H \notin \Sset \ (\text{still breaching at the horizon}).
  \end{cases}
\]
That is, $\tau$ is the last-exit step from $\Sset$ plus one, the first step after which the
trajectory \emph{remains} in $\Sset$ through the horizon, with a right-censoring atom at
$+\infty$ for episodes that have not recovered by $H$.
\end{definition}

We emphasize what $\tau$ is and is not. It is a measurable functional of the
\emph{complete} length-$H$ trajectory, evaluated once per finished episode; it is a last-exit
time, and deliberately \emph{not} a stopping time with respect to the forward filtration,
since deciding $\{\tau\le t\}$ would require knowing whether the trajectory re-breaches after
$t$. This is not a defect to be hedged. Conformal prediction needs only that each completed
episode yield a well-defined score in the totally ordered set $[0,H]\cup\{+\infty\}$, so no
adaptedness or stopping-time property is required, and none is claimed. The $+\infty$ atom is
essential, because a diverging
controller yields $\tau=+\infty$, and the certificate must treat it without assuming a
density. This is precisely where a naive numeric approach would fail. Recovery time is not a
real-valued quantity to which one can fit a Gaussian or even a finite-mean distribution,
because a non-trivial mass of episodes never recover, and any procedure that quietly drops or
imputes those episodes would manufacture a finite deadline out of data that does not support
one. By keeping $+\infty$ as a genuine largest element of a totally ordered set, the
certificate carries the non-recovering episodes through the calibration faithfully, and they
are exactly what force an honest $+\infty$ deadline when the recovery rate is too low to
certify. The ordering, not any arithmetic, is all the construction needs.

\begin{assumption}[Exchangeability]
\label{ass:exch}
The calibration episodes' recovery times $\tau_1,\dots,\tau_n$ and a future test episode's
$\tau_{n+1}$ are exchangeable (e.g.\ i.i.d.\ from a fixed fault distribution $P$), each
valued in the totally ordered set $[0,H]\cup\{+\infty\}$.
\end{assumption}

Exchangeability is the one assumption the marginal guarantee rests on, and it is worth
stating what it does and does not demand. It asks that the calibration faults and the next
deployed fault be drawn from the same distribution, with no requirement on the form of that
distribution, on the controller, or on the plant dynamics. It does \emph{not} ask for
independence in the strong sense, only the weaker symmetry that the joint law is invariant to
reordering. When the deployment fault mix departs from the calibration mix, exchangeability
breaks, and rather than abandon the guarantee we recover it by reweighting
(Thm.~\ref{thm:weighted}) when the shift is known, or restrict it to per-class statements
(Thm.~\ref{thm:mondrian}) when the relevant structure is the fault class. The assumption is
thus not a fragile precondition but the hinge on which the entire family of guarantees
turns, and the rest of Sec.~\ref{sec:cert} is organized around what each relaxation of it
buys.

\section{The Conformal Recovery-Deadline Certificate}
\label{sec:cert}
The certificate has three pieces, namely the deadline itself, the engagement rule that turns
the deadline into shield behavior, and the coverage guarantees that say what the deadline
delivers. We present them in that order, with the intuition for each made explicit, because
the construction is deliberately simple and its value lies in what the simplicity buys rather
than in any technical intricacy.

\subsection{Deadline} For miscoverage level $\alpha\in(0,1)$ and $n$ calibration recovery
times, define the split-conformal deadline
\begin{equation}
  \label{eq:deadline}
  \dalpha \;=\;
  \begin{cases}
    \tau_{(k)}, & k := \lceil (1-\alpha)(n+1)\rceil \le n,\\
    +\infty, & k > n,
  \end{cases}
\end{equation}
where $\tau_{(1)}\le\cdots\le\tau_{(n)}$ are the ordered calibration values (with $+\infty$
sorting last). When the controller's recovery rate is below $1-\alpha$, $k>n$ forces
$\dalpha=+\infty$, so the certificate \emph{refuses} rather than fabricates a finite
deadline. The same inequality fixes the smallest usable calibration set, since $k\le n$
requires $n\ge\lceil1/\alpha\rceil-1$ recoveries (for instance $n\ge19$ at $\alpha=0.05$),
below which no finite deadline can carry the requested rate and the certificate returns
$+\infty$ by the identical rule.

The deadline is nothing more than an order statistic of the held-out recovery times,
specifically the smallest deadline under which at least a $1-\alpha$ fraction of calibration
episodes recovered, with the conformal $+1$ correction that makes the guarantee
finite-sample rather than asymptotic. The construction therefore needs no model of the
controller and no fit to the recovery-time distribution, which is exactly its strength.
Anything it certifies, it certifies distribution-free. The $+\infty$ branch deserves emphasis
as a feature, not an edge case. If the controller recovers on fewer than a $1-\alpha$
fraction of calibration faults, then no finite deadline can honestly cover the next fault at
the requested rate, and the certificate reports this by returning $+\infty$. It declines to
issue a deadline it cannot back, which is precisely the behavior one wants from an object that
sits in a safety-critical loop. A certificate that always returned a finite number would be
hiding its own incapacity; ours surfaces it, and the engagement rule below treats an
$+\infty$ deadline correctly by deferring entirely to the verified backstop. The refusal is
the certificate telling the truth about a controller that is not good enough to be trusted
with delayed fallback, and that honesty is what makes the finite deadlines it does issue
worth trusting.

\begin{theorem}[Marginal coverage]
\label{thm:marginal}
Under Assumption~\ref{ass:exch}, the deadline \eqref{eq:deadline} satisfies
$\Prob(\tau_{n+1} \le \dalpha) \ge 1-\alpha$. If the $\tau_i$ are almost surely distinct,
$\Prob(\tau_{n+1}\le\dalpha) \le 1-\alpha + 1/(n+1)$.
\end{theorem}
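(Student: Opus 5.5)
The plan is the standard split-conformal rank argument, carried out in the totally ordered set $[0,H]\cup\{+\infty\}$ rather than on the real line. Nothing in it uses arithmetic on the scores, only comparisons, so the $+\infty$ atom needs no special treatment beyond tie-breaking. First I dispose of the trivial branch. If $k=\lceil(1-\alpha)(n+1)\rceil>n$, then $\dalpha=+\infty$, and $\tau_{n+1}\le+\infty$ holds surely, so the lower bound is immediate. The upper bound in this branch also holds: $k\ge n+1$ means $(1-\alpha)(n+1)>n$, hence $1-\alpha+1/(n+1)>1$. So I may assume $k\le n$.

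For the lower bound I would first establish the deterministic equivalence
\[
\tau_{n+1}\le\tau_{(k)} \iff \tau_{n+1}\le \tau^{+}_{(k)},
\]
where $\tau^{+}_{(1)}\le\cdots\le\tau^{+}_{(n+1)}$ are the order statistics of all $n+1$ values. If $\tau_{n+1}>\tau_{(k)}$, then inserting $\tau_{n+1}$ above position $k$ leaves the $k$-th smallest of the augmented list equal to $\tau_{(k)}<\tau_{n+1}$. Conversely, if $\tau_{n+1}\le\tau_{(k)}$, then at most $k-1$ of the calibration values lie strictly below $\tau_{n+1}$, so $\tau_{n+1}\le\tau^{+}_{(k)}$.

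Exchangeability then does the rest. The event $\{\tau_{n+1}\le\tau^{+}_{(k)}\}$ has the same probability as $\{\tau_i\le\tau^{+}_{(k)}\}$ for every $i\le n+1$, because $\tau^{+}_{(k)}$ is a symmetric function of the sample. Averaging over $i$ gives
\[
\Prob(\tau_{n+1}\le\tau^{+}_{(k)})=\tfrac{1}{n+1}\,\E\,\#\{i:\tau_i\le\tau^{+}_{(k)}\}.
\]
The count inside the expectation is at least $k$ deterministically, even with ties, including ties at $+\infty$. Hence the probability is at least $k/(n+1)\ge 1-\alpha$.

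For the upper bound under almost-sure distinctness, the count $\#\{i:\tau_i\le\tau^{+}_{(k)}\}$ equals exactly $k$ almost surely. The same averaging identity then gives $\Prob(\tau_{n+1}\le\dalpha)=k/(n+1)$. Since $k<(1-\alpha)(n+1)+1$, this is less than $1-\alpha+1/(n+1)$.

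The main obstacle I expect is the tie and censoring issue. In practice, horizon-censored recovery times have a point mass at $+\infty$ (and possibly discrete atoms), so ranks are not uniform. The argument must therefore rely only on the count inequality $\ge k$, not on uniformity of ranks. I would check carefully that the equivalence step and the symmetric-function step both hold with ties, treating $+\infty$ as an ordinary maximal element. I would also note that distinctness is impossible once two episodes are censored, so the upper bound applies only when the censoring mass is zero. This distinction between the two bounds is worth stating explicitly after the proof.
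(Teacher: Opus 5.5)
Your proof is correct. It handles ties by a different mechanism than the paper's proof.

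\textbf{The paper's route.} The paper breaks ties uniformly at random, so the rank of $\tau_{n+1}$ among all $n+1$ values is exactly uniform on $\{1,\dots,n+1\}$. It then notes that a rank of at most $k$ forces $\tau_{n+1}\le\tau_{(k)}$. The upper bound is cited as the standard no-ties quantile-inversion bound rather than proved.

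\textbf{Your route.} You use no auxiliary randomization. The deterministic equivalence $\tau_{n+1}\le\tau_{(k)}\iff\tau_{n+1}\le\tau^{+}_{(k)}$ turns the coverage event into one about a symmetric statistic of the full sample. The averaging identity and the deterministic count $\#\{i:\tau_i\le\tau^{+}_{(k)}\}\ge k$ then give the lower bound, with ties (including ties at $+\infty$) handled automatically.

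\textbf{What each buys.} Your argument is more self-contained:
\begin{itemize}
\item it needs no tie-breaking device, and no check that the tie-breaking rule preserves exchangeability;
\item it proves the upper bound instead of citing it, showing coverage equals $k/(n+1)$ exactly under distinctness;
\item it explicitly checks the upper bound in the $k>n$ branch, which the paper leaves implicit.
\end{itemize}
The paper's route is shorter and matches the textbook presentation.

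\textbf{A correction to your closing aside.} Ties do not come only from censoring. The finite recovery times are integer-valued, so under i.i.d.\ sampling $\Prob(\tau_1=\tau_2)=\sum_v\Prob(\tau_1=v)^2>0$ even with zero censoring mass. The distinctness hypothesis of the upper bound therefore never holds in this setting. This is why the paper (Remark~\ref{rem:discrete}) treats the lower bound as the only operative guarantee.
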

\begin{proof}
By exchangeability the rank of $\tau_{n+1}$ among $\{\tau_1,\dots,\tau_{n+1}\}$ is uniform
on $\{1,\dots,n+1\}$ (ties broken uniformly at random, or by any exchangeable rule). The
event $\{\tau_{n+1}\le \tau_{(k)}\}$ contains the event that this rank is at most $k$, whose
probability is at least $k/(n+1) = \lceil(1-\alpha)(n+1)\rceil/(n+1) \ge 1-\alpha$. The
order on $[0,H]\cup\{+\infty\}$ is total, so the argument uses no continuity and the
$+\infty$ atom is handled as an ordinary largest element; when $k>n$, $\dalpha=+\infty$ and
the claim is trivial since $\tau_{n+1}\le+\infty$ always. The upper bound is the standard
no-ties quantile-inversion bound.
\end{proof}

\begin{remark}[The operative bound is the lower one]
\label{rem:discrete}
Recovery times here are integer-valued on $\{0,1,\dots,H\}$ together with the $+\infty$ atom,
so the $\tau_i$ are \emph{not} almost surely distinct and the no-ties upper bound of
Theorem~\ref{thm:marginal} does not apply, because ties are generic, so coverage is only
guaranteed to be at least $1-\alpha$ and may exceed it. The over-coverage is bounded by the
mass of the largest tied recovery-time value, which is small whenever recovery times are
spread across many horizon steps, and the experiments (Sec.~\ref{sec:exp}) confirm coverage
sits close to the nominal target rather than far above it. We state the lower bound as the
operative guarantee and read the empirical coverage as the check on how much, if any,
discreteness slack the construction actually carries.
\end{remark}

\begin{remark}[Scope of the marginal guarantee]
\label{rem:marginal}
Theorem~\ref{thm:marginal} is \emph{marginal} coverage over the joint law of calibration
and test, averaging over the draw of the calibration set and over fault classes. It is
\emph{not} coverage conditional on a fixed calibration set, nor conditional on the fault
class. A deadline calibrated on a fault mix can under-cover a hard sub-class even while
meeting its marginal target (Sec.~\ref{sec:exp} measures exactly this), and the cure is
Theorem~\ref{thm:mondrian}. The censoring atom further means a controller that refuses to
engage (never breaches) trivially ``covers,'' so coverage must be read alongside the
recovery rate, never alone. We state this scope precisely because a coverage guarantee is
only as useful as the claim it actually makes, and the experiments report exactly the
conditional quantity the marginal theorem does not promise, turning the distinction into a
measurement rather than a hedge.
\end{remark}

\begin{corollary}[Training-conditional coverage]
\label{cor:training}
Condition on the calibration recovery times and let
$C=\Prob(\tau_{n+1}\le\dalpha\mid \tau_1,\dots,\tau_n)$ be the coverage the deployed deadline
actually delivers on future faults. For tie-free scores $C$ follows the law
$\mathrm{Beta}(k,\,n+1-k)$ with $k=\lceil(1-\alpha)(n+1)\rceil$, whose mean is
$k/(n+1)\ge1-\alpha$ and whose standard deviation is $O(n^{-1/2})$~\cite{Vovk2012}; for the
integer-valued recovery times here $C$ is stochastically at least this Beta variable. Hence
for any tolerance $\delta\in(0,1)$, a deadline guaranteeing $C\ge1-\alpha$ with
calibration-probability at least $1-\delta$ is obtained by raising the level to the
corresponding Beta quantile, a correction of order $\sqrt{\log(1/\delta)/n}$ over the
marginal level.
\end{corollary}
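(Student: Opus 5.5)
The plan has three steps: show that the delivered coverage is a uniform order statistic, extend this to tied scores by a monotone coupling, and then invert the Beta tail to get the stated correction.

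\textbf{Tie-free case.} Let $F$ be the common law of the i.i.d.\ recovery times, treated as a distribution on the totally ordered set $[0,H]\cup\{+\infty\}$. When $k\le n$, conditioning on the calibration set gives
\[
C = F(\tau_{(k)}) = \Prob(\tau\le\tau_{(k)}\mid\tau_{(k)}).
\]
If $F$ is continuous, the probability integral transform makes $U_i=F(\tau_i)$ i.i.d.\ $\mathrm{Unif}(0,1)$. Since $F$ is monotone, $F(\tau_{(k)})=U_{(k)}$, the $k$-th uniform order statistic. Its law is $\mathrm{Beta}(k,n+1-k)$, with mean $k/(n+1)$ and variance $k(n+1-k)/\big((n+1)^2(n+2)\big)=O(1/n)$.

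The mean bound $k/(n+1)\ge1-\alpha$ holds because $k=\lceil(1-\alpha)(n+1)\rceil$. Integrating $C$ over the calibration draw recovers Theorem~\ref{thm:marginal}, which serves as a consistency check. When $k>n$ we have $\dalpha=+\infty$ and $C=1$, so the claim is trivial.

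\textbf{Ties and the $+\infty$ atom.} Here I would use a quantile coupling. Write $\tau_i=F^{-1}(U_i)$ with $U_i$ i.i.d.\ uniform and $F^{-1}$ the generalized left-continuous inverse. Because $F^{-1}$ is nondecreasing, $\tau_{(k)}=F^{-1}(U_{(k)})$. The standard fact $F(F^{-1}(u))\ge u$ then gives
\[
C=F(\tau_{(k)})\ge U_{(k)}.
\]
So $C$ stochastically dominates $\mathrm{Beta}(k,n+1-k)$, as the corollary claims for discrete scores. The total order with $+\infty$ as its top element is all the construction uses, just as in Theorem~\ref{thm:marginal}.

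\textbf{The $\delta$-correction.} For fixed $\delta$, choose the smallest $k'$ such that the $\delta$-quantile of $\mathrm{Beta}(k',n+1-k')$ is at least $1-\alpha$, and use $\tau_{(k')}$ as the deadline. By the dominance just shown,
\[
\Prob(C\ge1-\alpha)\ge\Prob(U_{(k')}\ge1-\alpha)\ge1-\delta.
\]
To size the correction, note that $U_{(k')}<1-\alpha$ exactly when at least $k'$ of the $n$ uniforms fall below $1-\alpha$. That count is $\mathrm{Bin}(n,1-\alpha)$, so Hoeffding's inequality gives
\[
k'/n\le 1-\alpha+\sqrt{\log(1/\delta)/(2n)}.
\]
The level is therefore raised by $O(\sqrt{\log(1/\delta)/n})$.

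\textbf{Main obstacle.} I expect the delicate step to be the tied/discrete case. Two things need care there: the generalized-inverse inequality, including its behavior at the $+\infty$ atom and at $k>n$, and stating the result as stochastic dominance rather than an exact Beta law. The remaining steps are standard; the $O(n^{-1/2})$ standard deviation in particular can be cited from~\cite{Vovk2012}.
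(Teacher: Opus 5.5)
Your proof is correct. It is a self-contained derivation of something the paper does not prove: the paper attributes the $\mathrm{Beta}(k,n+1-k)$ law to~\cite{Vovk2012}, and simply asserts both the stochastic dominance for integer-valued recovery times and the $\sqrt{\log(1/\delta)/n}$ size of the correction.

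On the tie-free part, your probability-integral-transform step ($C=F(\tau_{(k)})=U_{(k)}$ for continuous $F$) is the standard reasoning behind the cited result. Your additions are the quantile coupling and the explicit correction. The coupling $\tau_i=F^{-1}(U_i)$ gives $C=F(F^{-1}(U_{(k)}))\ge U_{(k)}$ pointwise. That proves the dominance claim the paper actually needs, since its scores are discrete with an atom at $+\infty$. The binomial/Hoeffding inversion then makes the $\delta$-correction explicit. The citation buys brevity. Your version shows that only the total order and right-continuity of $F$ on $[0,H]\cup\{+\infty\}$ are used, and it handles the $k>n$ refusal case directly.

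Two small points:
\begin{itemize}
\item The inversion should read $k'\le\lceil n(1-\alpha)+\sqrt{n\log(1/\delta)/2}\,\rceil$. The ceiling adds up to $1/n$ to your bound on $k'/n$. This does not change the order, but your displayed inequality can fail without it.
\item Your argument, like the corollary itself, needs i.i.d.\ episodes with a fixed common law $F$. Mere exchangeability (Assumption~\ref{ass:exch}) is not enough, because only then does $C=F(\tau_{(k)})$ hold. You did assume i.i.d., but it is worth stating explicitly.
\end{itemize}
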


The marginal theorem controls the \emph{mean} of the deployed coverage, while
Corollary~\ref{cor:training} controls its lower tail, and a flight operator should ask for
the latter because a deployed certificate is computed once from the single calibration set in
hand rather than averaged over hypothetical re-draws. The two together let one trade a
slightly later deadline for a stated $(1-\delta)$ confidence that the coverage delivered in
flight meets target; the deadlines we report are the marginal-level ones, with the
training-conditional inflation available in closed form from the Beta quantile whenever an
operator wants to fix that confidence explicitly.

\subsection{Engagement rule and the safety/autonomy split} The deadline becomes shield
behavior through a single rule. The shield keeps $\pi$ engaged
while $e_t<\Phi_c$ and $t<\dalpha$; it escalates to the safe controller at the first time
$e_t\ge\Phi_c$ (verified backstop) or at $t=\dalpha$ if the episode has not recovered. The
rule has exactly two exit conditions, and they carry different kinds of authority. The
critical-limit exit is a hard physical gate enforced by the verified backstop, while the
deadline exit is the statistical certificate's judgment that the controller has had its
certified time and not used it. A recovering controller returns to the safe set well before
either fires and is never disturbed; a diverging controller either reaches $\Phi_c$ and is
caught by the backstop or runs out its deadline and is caught there. The rule thus admits the
capable controller and rejects the incapable one using the very quantity, recovery time, that
distinguishes them. One subtlety is
worth stating plainly. The deadline is calibrated on the offline last-exit $\tau$, which is
known only at the horizon, whereas the runtime rule must decide from the current state, and
the two notions can disagree only when a trajectory re-breaches after appearing recovered.
That disagreement changes only whether $\pi$ is retained, never whether the vehicle is safe,
because the verified backstop governs the $\Phi_c$ boundary independently of either notion of
recovery. The autonomy guarantee is therefore stated with respect to permanent recovery,
and the online rule is at worst conservative against it.

\begin{proposition}[Safety is verified; autonomy is certified]
\label{prop:safety}
Suppose the backstop soundly detects $e_t\ge\Phi_c$ and the safe controller renders
$\{e\ge\Phi_c\}$ avoidable once engaged. Then, under the engagement rule, the following hold.
(i) \emph{Safety} holds for \emph{every} episode regardless of $\alpha$ or of whether
Assumption~\ref{ass:exch} holds. (ii) \emph{Autonomy} (the retention of $\pi$ through
recovery) holds on at least a $1-\alpha$ fraction of recovering test episodes in
expectation (Thm.~\ref{thm:marginal}).
\end{proposition}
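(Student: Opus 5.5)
The plan is to prove the two parts separately, since they rest on disjoint hypotheses; that separation is the content of the proposition. For (i), I fix an arbitrary episode and an arbitrary deadline value $\dalpha\in[0,H]\cup\{+\infty\}$. I treat $\dalpha$ as an adversarial constant: no property of the calibration data, of $\alpha$, or of Assumption~\ref{ass:exch} is used. The engagement rule keeps $\pi$ only while $e_t<\Phi_c$ and $t<\dalpha$. So at every step $t$ at which $\pi$ holds authority, the current state satisfies $e_t<\Phi_c$.

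Let $T$ be the first time either exit condition fires. There are two cases. If the backstop fires, soundness of the monitor gives that it triggers at the first $t$ with $e_t\ge\Phi_c$. If the deadline fires, then $T=\dalpha$ and $e_T<\Phi_c$. In both cases the safe controller is engaged at $T$. By the hypothesis that it renders $\{e\ge\Phi_c\}$ avoidable once engaged, the trajectory is kept out of the critical set for all $t\ge T$. The third possibility is that neither condition fires before $H$; then $e_t<\Phi_c$ for all $t\le H$ directly.

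I take the statement's safety notion to be ``the critical set is never persistently entered / is avoided under the safe controller's authority.'' The hypothesis on the safe controller must be read as covering engagement from the boundary state $e_T\ge\Phi_c$ at the backstop trigger, not merely from the interior. I will state that reading explicitly, since it is the one place the argument could leak. With that reading, safety holds pathwise for every episode, hence independently of $\alpha$ and exchangeability.

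For (ii), I identify the autonomy event. Take a test episode that recovers ($\tau_{n+1}<\infty$). Suppose also that along it $e_t<\Phi_c$ for all $t$, which is the natural meaning of ``recovering'': a trajectory that would cross $\Phi_c$ is not retained by any rule. On such an episode the only way to lose $\pi$ is the deadline exit. I will argue that $\{\tau_{n+1}\le\dalpha\}$ implies $\pi$ is never displaced by the deadline. If $\tau_{n+1}\le\dalpha$, the trajectory lies in $\Sset$ for all $t\ge\tau_{n+1}$. So by the time $t=\dalpha$ arrives the episode has recovered, and the rule does not escalate. Here I use the paper's remark that the online notion of recovery is at worst conservative relative to the last-exit $\tau$: on the event $\tau\le\dalpha$ the two agree at time $\dalpha$.

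Hence $\Prob(\text{retained})\ge\Prob(\tau_{n+1}\le\dalpha)\ge1-\alpha$ by Theorem~\ref{thm:marginal}. This probability is marginal over calibration and test draws, which is the ``in expectation'' qualifier. One subtlety is that the theorem's bound is over all test episodes, not only recovering ones. Conditioning on recovery only helps, because non-recovering episodes have $\tau=+\infty$ and contribute the failures. I will phrase (ii) as the unconditional statement $\Prob(\tau_{n+1}\le\dalpha,\ \pi\text{ retained})\ge1-\alpha$ restricted to episodes not tripping $\Phi_c$, rather than a conditional ratio, to avoid the denominator issue.

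The main obstacle is not mathematical depth but pinning down definitions so that (ii) follows cleanly. These are the interaction between the offline last-exit $\tau$ and the online ``has recovered'' test at time $\dalpha$, and the interaction with backstop trips on recovering episodes. I would first try to define runtime recovery at time $t$ as $e_t\in\Sset$ and show that $\tau\le t$ implies it, which gives the needed inclusion in one line.
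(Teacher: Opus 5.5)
Your proposal is correct and follows the paper's proof. For (i), escalation happens no later than the sound backstop's detection time, independently of $\dalpha$, $\alpha$ and exchangeability; for (ii), a deadline-triggered escalation can occur only if $\tau_{n+1}>\dalpha$, so Theorem~\ref{thm:marginal} bounds its probability by $\alpha$. You are more careful than the paper on two points: the boundary-state reading of the safe-controller hypothesis, and the online-versus-last-exit recovery test. The cleanest statement of your final version of (ii) is the paper's contrapositive, $\Prob(\text{deadline escalation})\le\Prob(\tau_{n+1}>\dalpha)\le\alpha$; it needs neither conditioning on recovery nor a restriction to episodes that avoid $\Phi_c$.
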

\begin{proof}
(i) The escalation time is at most the backstop's detection time of $\{e\ge\Phi_c\}$, which
is sound by hypothesis and independent of the conformal construction; hence the critical set
is avoided on every episode. (ii) An episode is escalated early (before recovery) only if
$\tau_{n+1}>\dalpha$ and it has not yet reached $\Phi_c$; by Thm.~\ref{thm:marginal}
$\Prob(\tau_{n+1}\le\dalpha)\ge1-\alpha$.
\end{proof}

Proposition~\ref{prop:safety} is the reliability-asymmetric core, and it is what makes the
whole construction deployable. A miscalibrated deadline
(wrong $\alpha$, or a violated exchangeability assumption) costs \emph{autonomy}, in the form
of a needless or delayed escalation, but never \emph{safety}, because the verified backstop
is the floor. The statistical certificate is permitted to be wrong; the verified monitor is
not. This is the precise sense in which a distribution-free statistical object can be placed
inside a safety-critical loop without inheriting the burden of formal verification. Its
errors are routed entirely into the autonomy budget, where they are recoverable, and away
from the safety budget, where they would not be. The two guarantees are proved under
different assumptions, fail in different ways, and are never asked to cover for each other,
and that separation is the design discipline the rest of the paper instantiates.

\subsection{Known fault-distribution shift} Deployment fault mixes rarely equal calibration
mixes, and a guarantee that silently assumed they did would be brittle in exactly the
situation a flight operator cares about. We therefore handle the shift head-on. If the test
episode is drawn from $Q\ll P$ with known likelihood ratio
$w=\mathrm{d}Q/\mathrm{d}P$, weighted conformal prediction~\cite{Tibshirani2019} restores
coverage by reweighting each calibration point by how much more likely it is under
deployment than under calibration. Let $p_i = w(x_i)/\big(\sum_{j=1}^n w(x_j)+w_{\star}\big)$
be normalized weights on
the calibration values (with $w_\star$ the test-point weight) and define $\dalpha^{w}$ as the
$(1-\alpha)$-quantile of $\sum_i p_i\,\delta_{\tau_i} + p_\star\,\delta_{+\infty}$. The
weighted deadline is the same order-statistic idea applied to a reweighted empirical
distribution, so it inherits the distribution-free character of the unweighted version while
correcting for the mix change.

\begin{theorem}[Coverage under known shift]
\label{thm:weighted}
If $(x_i,\tau_i)$ are i.i.d.\ from $P$ and $(x_{n+1},\tau_{n+1})$ from $Q\ll P$ with exact
$w=\mathrm{d}Q/\mathrm{d}P$, then with the per-test-point weight $w_\star=w(x_{n+1})$,
$\Prob_{Q}(\tau_{n+1}\le \dalpha^{w}) \ge 1-\alpha$.
\end{theorem}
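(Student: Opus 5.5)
The plan follows the weighted-exchangeability argument of Tibshirani et al.~\cite{Tibshirani2019}, adapted to the totally ordered score space $[0,H]\cup\{+\infty\}$ exactly as in the proof of Theorem~\ref{thm:marginal}. Throughout, $p_\star = w_\star/\big(\sum_{j=1}^n w(x_j)+w_\star\big)$ denotes the test-point weight, so that the $p_i$ and $p_\star$ sum to one. Write $Z_i=(x_i,\tau_i)$ for $i=1,\dots,n+1$.

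\textbf{Step 1: joint density.} The joint law of $(Z_1,\dots,Z_{n+1})$ has density $\prod_{i\le n}\mathrm{d}P(z_i)\cdot w(x_{n+1})\,\mathrm{d}P(z_{n+1})$ with respect to $P^{\otimes(n+1)}$. This uses $Q\ll P$ and the fact that $w$ depends on $x$ alone, so $\mathrm{d}Q/\mathrm{d}P$ on the pair $(x,\tau)$ equals $w(x)$. This product is $P^{\otimes(n+1)}$ times $w(x_{n+1})$, so the data are \emph{weighted exchangeable} with weight function acting only on the last coordinate.

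\textbf{Step 2: condition on the empirical measure.} Let $E$ be the event that the unordered multiset $\{Z_1,\dots,Z_{n+1}\}$ equals a given $\{z_1,\dots,z_{n+1}\}$. Given $E$, the probability that the test point is the element $z_i$ equals
\[
\frac{\sum_{\sigma:\sigma(n+1)=i} w(x_{\sigma(n+1)})}{\sum_{\sigma} w(x_{\sigma(n+1)})} \;=\; \frac{w(x_i)}{\sum_{j=1}^{n+1} w(x_j)} \;=:\; p_i^{E},
\]
because the $P$-part of the density is permutation invariant. Consequently, conditionally on $E$, $\tau_{n+1}$ is distributed as the weighted empirical law $F^E=\sum_{i=1}^{n+1}p_i^{E}\,\delta_{\tau_i}$.

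It follows that $\Prob(\tau_{n+1}\le q_{1-\alpha}(F^E)\mid E)\ge 1-\alpha$. This step uses only the total order on $[0,H]\cup\{+\infty\}$, with the $+\infty$ atom treated as the largest element and ties harmless, because the quantile is defined as the smallest value whose cumulative weight is at least $1-\alpha$.

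\textbf{Step 3: replace the unknown test score by $+\infty$.} $F^E$ contains the unobserved $\tau_{n+1}$, whereas $\dalpha^w$ uses the law $\sum_{i\le n}p_i\delta_{\tau_i}+p_\star\delta_{+\infty}$. Observe that $p_i^{E}$ coincides with $p_i$ and $p_{n+1}^E$ with $p_\star$. The deployed law is therefore $F^E$ with the test atom moved up to $+\infty$, and it is stochastically larger. Hence $\dalpha^w\ge q_{1-\alpha}(F^E)$ pointwise, so $\{\tau_{n+1}\le q_{1-\alpha}(F^E)\}\subseteq\{\tau_{n+1}\le\dalpha^w\}$. Averaging over $E$ by the tower property gives $\Prob_Q(\tau_{n+1}\le\dalpha^w)\ge1-\alpha$.

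When the weights make the $(1-\alpha)$-quantile land on the $+\infty$ atom, $\dalpha^w=+\infty$ and the claim is trivial, which is the weighted analogue of the refusal branch.

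\textbf{Main obstacle.} I expect Step 2 to be the delicate part. It requires stating carefully the conditioning on the multiset when ties and the $+\infty$ atom occur, so that the distinct-point formula does not double-count. I would handle it by conditioning on the multiset of the \emph{indexed} pairs, or equivalently by randomizing ties with an independent uniform tiebreaker appended to each $Z_i$, which makes the points almost surely distinct. This does not alter the events $\{\tau_{n+1}\le\cdot\}$, since quantiles of the ordered scores are unaffected. The rest is routine measure-theoretic bookkeeping ($w<\infty$ $P$-a.s.\ and $\E_P w=1$), which I would only sketch.
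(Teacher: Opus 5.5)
Your proposal is correct and follows the same route as the paper. The paper's proof just invokes the weighted-exchangeability result of Tibshirani et al.\ and notes that the rank argument of Theorem~\ref{thm:marginal} carries through with the $p_i$ in place of uniform mass; your Steps 1--3 write that argument out in full (tilted joint density, test-point identity distributed by the normalized weights given the multiset, and the quantile only increasing when the test atom moves to $+\infty$). On ties, the same multiset computation assigns each repeated copy mass $w(x)/\sum_j w(x_j)$, so the conditional law is still exactly the weighted empirical law and your uniform tiebreaker is a convenience rather than a necessity.
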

\begin{proof}
This is the weighted-exchangeability result of Tibshirani et al.~\cite{Tibshirani2019}. The
weighted empirical distribution above is the conformal predictive distribution under the
covariate shift, and the rank argument of Thm.~\ref{thm:marginal} carries through with the
$p_i$ in place of uniform mass.
\end{proof}

\begin{remark}[Batch deadline vs.\ per-point guarantee]
\label{rem:batch}
A runtime deadline must be fixed \emph{before} the next fault is seen, so its weight
$w_\star$ cannot be the (unknown) test point's. In practice we set $w_\star$ to the
deployment-averaged weight $\E_Q[w]$, which targets the $Q$-quantile of recovery time; the
finite-sample per-point guarantee of Thm.~\ref{thm:weighted} is the idealization. We
therefore \emph{validate} the batch deadline's coverage on $Q$ empirically
(Sec.~\ref{sec:exp}, Table~\ref{tab:pend}); the effective sample size
$\widehat{n}_{\mathrm{eff}}=(\sum_i w_i)^2/\sum_i w_i^2$ reports how few calibration points
the shift leaves effective. This is the correct engineering posture for a runtime deadline,
namely to compute it from the deployment-averaged weight and then confirm the resulting
coverage on the deployment distribution directly, with the effective sample size disclosed as
an explicit measure of how much the shift has cost the calibration.
\end{remark}

\begin{theorem}[Group-conditional (Mondrian) coverage]
\label{thm:mondrian}
Partition faults into a priori classes $g$. Applying \eqref{eq:deadline} within class $g$
(deadline $\dalpha^{(g)}$ from the $n_g$ calibration recovery times of that class) gives
$\Prob(\tau_{n+1}\le \dalpha^{(g)}\mid \text{class}=g)\ge 1-\alpha$ for each $g$ with
$n_g\ge \lceil(1-\alpha)/\alpha\rceil$; otherwise $\dalpha^{(g)}=+\infty$.
\end{theorem}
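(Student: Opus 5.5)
The plan is to reduce Theorem~\ref{thm:mondrian} to Theorem~\ref{thm:marginal} applied within each class. Fix a class $g$. The key observation is that a Mondrian deadline for class $g$ is the split-conformal deadline \eqref{eq:deadline} computed only from the class-$g$ calibration episodes. It is used only when the test fault belongs to $g$. So it suffices to show that, conditionally on $\{\text{class}_{n+1}=g\}$, the $n_g$ calibration recovery times of class $g$ and the test recovery time $\tau_{n+1}$ form an exchangeable family. I would establish this by assuming the classes are assigned a priori, i.e.\ as a fixed measurable function of the fault parameter and not chosen from the data, with the episodes i.i.d.\ from $P$ as in Assumption~\ref{ass:exch}.

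\textbf{Step 1: within-class exchangeability.} Condition additionally on the full class-label vector, which fixes $n_g$. Given that the labels of a set of $n_g+1$ indices (the $n_g$ class-$g$ calibration indices together with the test index) all equal $g$, the corresponding episodes are i.i.d.\ from the conditional law $P(\cdot\mid g)$. This follows because independence factorizes the joint law and conditioning each factor on its own label event preserves independence. They are therefore exchangeable. If only exchangeability, not i.i.d., is assumed, I would instead argue that the joint law is invariant under permutations fixing the label vector, which again gives exchangeability within the class. This is the step I expect to require the most care: the statement must be made conditional on $n_g$ (or on the labels), and the classes must not depend on the calibration outcomes, since data-dependent grouping would break the symmetry.

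\textbf{Step 2: apply the rank argument.} Conditionally on the labels, Theorem~\ref{thm:marginal} applies verbatim with $n$ replaced by $n_g$ and $k_g=\lceil(1-\alpha)(n_g+1)\rceil$. The order on $[0,H]\cup\{+\infty\}$ is total, so ties and the $+\infty$ atom are handled exactly as there. This gives
$\Prob(\tau_{n+1}\le\dalpha^{(g)}\mid \text{labels},\,\text{class}_{n+1}=g)\ge 1-\alpha$.
Taking the expectation over the label vector, given $\text{class}_{n+1}=g$, then yields the claimed class-conditional bound.

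\textbf{Step 3: the sample-size threshold.} Finally I would check the threshold. We have $k_g\le n_g$ iff $(1-\alpha)(n_g+1)\le n_g$, i.e.\ $n_g\ge(1-\alpha)/\alpha$. Since $n_g$ is an integer, this is equivalent to $n_g\ge\lceil(1-\alpha)/\alpha\rceil$. Below that threshold, \eqref{eq:deadline} returns $+\infty$, and coverage holds trivially. Hence the bound holds for every $g$ and every realized $n_g$, including $n_g=0$.
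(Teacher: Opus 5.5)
Your proof is correct and takes the same approach as the paper: within-class exchangeability lets Theorem~\ref{thm:marginal} apply with $n_g$ in place of $n$, and the threshold is exactly the condition $\lceil(1-\alpha)(n_g+1)\rceil\le n_g$. You also fill in details the paper's two-line proof leaves implicit, namely conditioning on the full label vector so that $n_g$ is fixed, and requiring a priori classes with exchangeability of the (class, recovery-time) pairs rather than of the $\tau_i$ alone.
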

\begin{proof}
Within each class the recovery times are exchangeable, so Thm.~\ref{thm:marginal} applies
conditionally on the class. The sample-size condition is $\lceil(1-\alpha)(n_g+1)\rceil\le
n_g$.
\end{proof}

The three guarantees form a graded response to how much is known about the deployment
distribution. When the deployment mix matches calibration, the marginal deadline
(Thm.~\ref{thm:marginal}) suffices and is the cheapest to compute. When the mix differs and
the shift is known, the weighted deadline (Thm.~\ref{thm:weighted}) restores coverage exactly.
When the relevant structure is the fault class and per-class coverage is what matters, the
Mondrian deadline (Thm.~\ref{thm:mondrian}) delivers it at the cost of needing enough
calibration episodes per class, and refuses with $+\infty$ for any class too sparsely
observed to certify. The same order-statistic construction underlies all three, which is why
the autonomy guarantee tracks the deployment assumptions cleanly rather than degrading in
some uncontrolled way when they weaken. The experiments exercise all three on real testbeds.

\section{Experiments}
\label{sec:exp}
We instantiate the certificate on two unrelated Simplex testbeds, and the purpose of using
two is specific. A result on a single plant would show that the certificate works there; the
same pathology and the same cure on two plants that share no dynamics, no actuator model, and
no controller code is what shows the certificate to be a domain-general \emph{mechanism}. The
spacecraft testbed carries the full closed-loop weight, being the deployed system of the
companion stack, while the pendulum testbed, where we own the fault sampler outright, is what
lets us exhibit the weighted and Mondrian guarantees against an \emph{exact} likelihood
ratio rather than an estimated one. Read together they cover both what a flight system can
demonstrate and what a fully controlled environment can prove.
Both testbeds define the safety
signal as a pointing/angle error, the safe set as a warning threshold, and recovery as
return-and-stay within it (Def.~\ref{def:recovery}); both have a verified backstop at a
critical angle. The adapting controller in each is an analytic feedback law fed an
\emph{online} estimate of the actuator fault, and the incapable baseline against which we
measure the pathology is the fault-unaware fixed controller. Coverage is reported as the mean
over repeated random calibration/test splits, which is an unbiased estimator of the marginal
coverage of Thm.~\ref{thm:marginal} and is the quantity that theorem actually guarantees.

\begin{figure*}[!t]\centering
\includegraphics[width=0.49\textwidth]{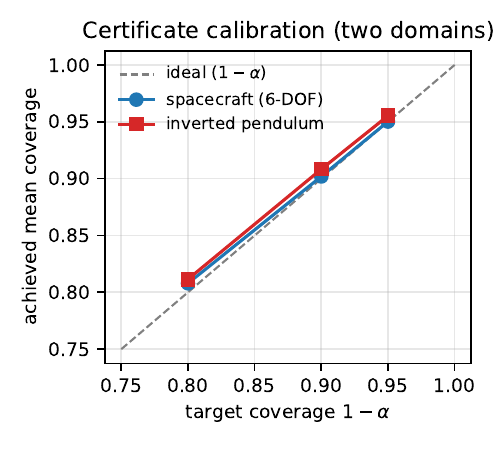}\hfill
\includegraphics[width=0.49\textwidth]{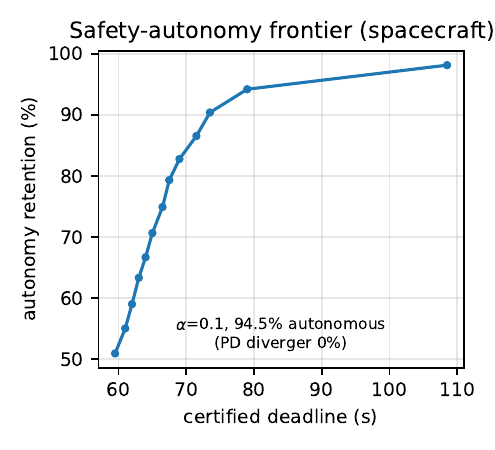}
\caption{\emph{Left.} Achieved vs.\ target coverage on both domains. Both the 6-DOF spacecraft
and the inverted pendulum track the $y{=}x$ ideal across $\alpha$, which is the evidence that
the certificate is calibrated on two plants that share no dynamics. \emph{Right.} The
spacecraft safety-autonomy frontier. A longer certified deadline (smaller $\alpha$) buys more
autonomy at the cost of trusting the controller longer in the breach zone, while the
fault-unaware diverger is caught at every operating point, so the frontier is a tunable
autonomy dial that never compromises the safety floor.}
\label{fig:calibration}
\end{figure*}

\subsection{Domain 1, 6-DOF spacecraft attitude control}
The first testbed is a Rapid-Motor-Adaptation controller on a 6-DOF Basilisk attitude
simulator recovering from unmodeled reaction-wheel faults. The safe set is $5^\circ$ pointing
error, and the verified backstop is at $\Phi_c=90^\circ$, a Kind~2-checked second-order
monitor that is sound by construction. This is the operationally serious case, because the
fault class is reversal-heavy gain faults on which the controller must transiently drive the
wrong way to identify the reversed wheel, which is exactly the regime that triggers the
suppression pathology under a latch. The controller is genuinely capable here, recovering to
$5^\circ$ on $99.8\%$ of held-out faults, with a median recovery of $119$ steps and a p95 of
$162$ steps, so the recovery-time distribution has the long but finite tail that a deadline
must accommodate.

\emph{The deadline is calibrated and tunable; the heuristic is neither.} The split-conformal
deadline meets its target coverage across all three operating points, achieving
$0.808/0.902/0.950$ at $\alpha=0.20/0.10/0.05$ over $300$ random calibration/test splits, so
the achieved coverage tracks the requested $1-\alpha$ to within sampling error
(Table~\ref{tab:spacecraft}). The contrast with the $p95\times1.3$ heuristic that earlier
reachability-RTA work reaches for is sharp. The heuristic over-covers at ${\sim}97\%$
regardless of the target, which means it is both untunable, since no choice of target moves
it, and unaccountable, since it carries no coverage guarantee at all. The conformal deadline
gives the operator a knob, and each setting of that knob comes with a distribution-free
promise about the resulting coverage, which is precisely what a hand-tuned margin cannot
offer.

\emph{The certificate discriminates capable from incapable.} Plugged into the engagement rule
at $\alpha=0.1$, where $\dalpha=152$ steps, the certificate keeps the recovering controller
$94.5\%$ autonomous, letting it work through the transient and retain authority, while the
fault-unaware controller, whose recovery time is effectively $+\infty$, misses the deadline
and is caught at it ($0\%$ autonomous). The same deadline thus admits the controller we want
and rejects the one we do not, using only the recovery time and the stated $1-\alpha$
guarantee. The choice of $\alpha$ traces a safety-autonomy frontier (Fig.~\ref{fig:calibration},
right), along which a smaller $\alpha$ buys a longer certified deadline and more autonomy at
the cost of trusting the controller longer in the breach zone, and the diverger is caught at
every operating point on that frontier. All numbers here are re-derived from the reproduced
Paper~A manifest (program/conformal\_rta.py).

\begin{table*}[!t]\centering\small
\caption{Domain 1 (spacecraft). Split-conformal deadline coverage vs.\ the heuristic, over 300 calibration/test splits of held-out gain faults.}
\label{tab:spacecraft}
\begin{tabular}{lccc}
\toprule
$\alpha$ & target $1-\alpha$ & conformal mean coverage & median deadline \\
\midrule
0.20 & 0.80 & 0.808 & 137 steps (68.5 s) \\
0.10 & 0.90 & 0.902 & 147 steps (73.5 s) \\
0.05 & 0.95 & 0.950 & 170 steps (85.0 s) \\
\midrule
\multicolumn{4}{l}{\footnotesize heuristic $p95\times1.3$ gives ${\sim}0.97$ coverage at every target (no guarantee, not tunable)} \\
\bottomrule
\end{tabular}
\end{table*}

\subsection{Domain 2, inverted pendulum (mechanism generality)}
The second testbed shares nothing with the first except the certificate. A torque-controlled
inverted pendulum (unstable upright) carries an actuator
effectiveness/sign fault $b$ on the control (applied torque $=b\,u$), which is the 1-DOF
analog of the spacecraft gain fault. The adapting controller infers $\mathrm{sign}(b)$ online
from the command-versus-acceleration correlation under \emph{noisy} rate sensing, and then
recovers, while the fault-unaware fixed law diverges on a reversal. The noise is what makes
this domain a fair stress test of the mechanism, because under noisy rate sensing a single
transient breach is genuinely ambiguous, so a naive latch is tempting and the temptation is
exactly the pathology we are characterizing. The safe set is $5^\circ$, the verified backstop
is $\Phi_c=45^\circ$, and faults are a mix of benign magnitude errors and direction
reversals. The implementation is pure-NumPy and deterministic
(conformal\_cert/domains/pendulum.py), so every number below regenerates bit-for-bit. The
significance of this domain is that we own the fault sampler outright, which means the
distribution-shift likelihood ratio is \emph{exact} rather than estimated, and that is what
turns the weighted-coverage guarantee from a claim into a clean demonstration.

\emph{Same pathology, same cure.} Every reversal forces a transient breach of the safe set,
so a latching Simplex shield suppresses the capable controller, holding it to $0\%$ autonomy,
exactly as on the spacecraft. Recovery-aware engagement at $\alpha=0.1$ keeps it $83.7\%$
autonomous, which is the reversal-class coverage at the marginal deadline and is reported as
such rather than as the marginal figure (cf.\ Rem.~\ref{rem:marginal}, which predicts that a
deadline calibrated on the fault mix can under-cover the hard reversal sub-class). The
diverging fixed law is caught at $0\%$. Marginal coverage meets target, reaching
$0.81/0.91/0.96$ at $\alpha=0.20/0.10/0.05$. That the pathology and the cure reappear
unchanged on a plant with no shared dynamics is the central evidence that the certificate is
a mechanism, not an artifact of the spacecraft model.

\emph{Known shift (Thm.~\ref{thm:weighted}).} Because we own the fault sampler, the shift
likelihood ratio is exact, and we can therefore test the weighted guarantee against ground
truth. Shifting the deployment reversal fraction from $30\%$
(calibration) to $70\%$ makes the deployment distribution strictly harder, and the unweighted
deadline duly under-covers at $86.0\%$, below its $0.90$ target, which is the exchangeability
assumption failing exactly as the theory says it must. The weighted deadline restores coverage
to $93.5\%$, back above target, with an effective sample size
$\widehat{n}_{\mathrm{eff}}=115$ that honestly reports how much the shift has thinned the
$200$-episode calibration set. This is the clean, verifiable instance of the exchangeability
cure that the spacecraft testbed, with its unknown true fault law, could only flag rather than
demonstrate, and it confirms that the weighted certificate does precisely what
Thm.~\ref{thm:weighted} promises when the likelihood ratio is known. The per-class (Mondrian,
Thm.~\ref{thm:mondrian}) deadlines are $0$ steps for the benign class, which never breaches,
and $37$ steps for the reversal class, which is the correct refusal-to-pool behavior, since a
single deadline averaged over both classes would systematically misserve each.

\emph{Engagement-rule baselines.} Table~\ref{tab:pend} places the certificate head-to-head
against the alternatives an RTA designer would actually consider, on the reversal regime, with
thresholds calibrated on a held-out split and metrics on a disjoint test split, all under the
same verified $\Phi_c$ backstop. The reading is clean. Latching destroys autonomy, retaining
just $0.5\%$, which is the pathology in a single number. The $p95\times1.3$ heuristic recovers
autonomy fully at $100\%$ but offers no guarantee and no tunability, so it cannot be defended
at review and cannot be dialed to a risk posture. The two conformal rules recover autonomy
\emph{with} a guarantee, and the distinction between them is what each guarantee is about.
Ours places the guarantee on recovery time, which is the property that matters for an adapting
controller, retaining $88.0\%$ autonomy, while value-gating~\cite{Tabbara2025} places it on
the peak excursion, a complementary object, and retains $93.5\%$. The two are not competitors
so much as guarantees on different quantities, and an operator who cares about return-time
wants the recovery-time bound. Crucially, all four rules catch $100\%$ of divergence before
the backstop, so none trades away the safety the backstop enforces. We omit Neural-Simplex
reverse switching~\cite{Phan2020} with cause rather than silently. On an open-loop-unstable
plant a control-direction reversal disables the safe controller too, since it shares the
reversed actuator, so reverse switching degenerates to latching here and would add nothing but
a redundant row.

\begin{table*}[!t]\centering\footnotesize
\caption{Domain 2 (pendulum). RTA engagement rules on the reversal regime, reporting autonomy retention on the capable adapting controller, divergence caught (before the verified backstop) on the fault-unaware controller, mean time-to-catch, and coverage guarantee.}
\label{tab:pend}
\setlength{\tabcolsep}{4pt}
\begin{tabular}{lcccl}
\toprule
Engagement rule & autonomy & divergence caught & time-to-catch & coverage guarantee \\
\midrule
Latching Simplex & 0.5\% & 100\% & 4.5 steps & none (the pathology) \\
Heuristic $p95\times1.3$ & 100.0\% & 100\% & 14.9 steps & none (over-covers) \\
CP-on-safety-value~\cite{Tabbara2025} & 93.5\% & 100\% & 11.5 steps & $1-\alpha$ on peak $|\theta|$ \\
\textbf{Conformal recovery-deadline (ours)} & 88.0\% & 100\% & 14.9 steps & $1-\alpha$ on recovery time \\
\bottomrule
\end{tabular}
\end{table*}

\begin{figure}[!t]\centering
\includegraphics[width=\linewidth]{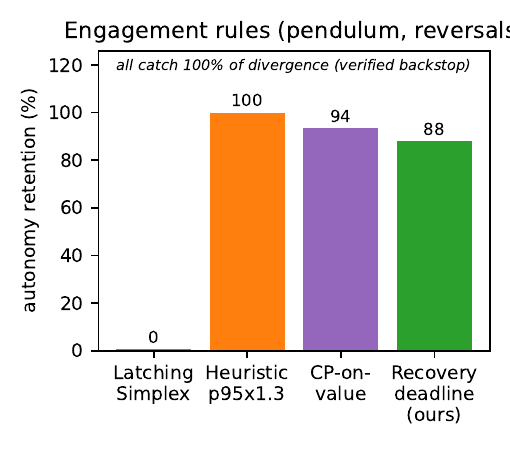}
\caption{Autonomy retention by engagement rule on the pendulum reversal regime. Latching
suppresses the capable controller (the pathology); the heuristic recovers autonomy but
without a guarantee; the two conformal rules recover it \emph{with} a guarantee. All four
catch $100\%$ of divergence at the verified backstop.}
\label{fig:engagement}
\end{figure}

\section{Discussion}
\label{sec:disc}
The result of this paper is a small one stated precisely, namely that the right object to
certify for an online-adapting controller under runtime assurance is its recovery time, and
that certifying it distribution-free reconciles a capability that a latch would suppress with
a safety floor a latch was protecting. We draw out three consequences, namely what
distribution-free recovery-deadline certification means for runtime assurance, what
generalizes across the two domains, and where the construction's authority comes from.

\subsection{Implications for runtime assurance} Runtime
assurance has always had to answer two questions, whether the protected controller is unsafe
now and what to do about it, and it has historically fused them into one present-state
switching decision. The pathology we identify is what happens when that fusion meets a
controller whose competence expresses itself as a transient breach. The decision to switch,
made on present state alone, fires on the very behavior that signals capability. Casting the
problem in terms of recovery time un-fuses the two questions. The verified backstop still
answers whether the state is unsafe \emph{now} at the critical limit, and the conformal
deadline separately answers whether the controller will \emph{return in time} from the
warning zone, and only the conjunction of those two answers triggers an escalation. This is
the methodological contribution beneath the mechanism. Distribution-free recovery-deadline
certification gives runtime assurance a principled way to grant a capable controller the
bounded breach it needs, without weakening the monitor and without hand-tuning a margin,
because the deadline carries a coverage guarantee the margin never could. The certificate's
willingness to return $+\infty$ when it cannot honestly certify is part of the same
discipline. It draws the line between controllers good enough to be trusted with delayed
fallback and controllers that are not, and it draws it from data rather than from optimism.

\subsection{Cross-domain generalization} The two testbeds share no dynamics, no
actuator model, and no controller code, and the certificate is the only component in common,
which is what licenses reading the agreement between them as evidence about the mechanism
rather than about either plant. Three things transfer. The \emph{pathology} transfers, since
latching suppresses the capable controller to near-zero autonomy on both the spacecraft and
the pendulum for the same structural reason, that a reversal forces a diagnostic breach. The
\emph{cure} transfers, since recovery-aware engagement restores autonomy on both while
catching every diverger at the backstop. And the \emph{coverage behavior} transfers, since
the achieved coverage tracks the target on both plants, $0.808/0.902/0.950$ on the spacecraft
and $0.81/0.91/0.96$ on the pendulum. Nothing in the construction depends on the dynamics
being attitude dynamics or pendulum dynamics, on the fault being a reaction-wheel reversal or
a torque-sign flip, or on the controller being a particular network, because the certificate
sees only a scalar safety signal and a recovery time derived from it. The implication is that
the recovery-deadline certificate is a drop-in for any Simplex-style system with an
online-adapting advanced controller and a verified backstop, and the two domains are the
existence proof of that portability.

\subsection{Source of the guarantee} The construction is sound enough to sit in a
safety-critical loop through the reliability-asymmetric separation of Prop.~\ref{prop:safety}.
Every quantity the certificate could get wrong, the coverage level, the exchangeability
assumption, the estimated likelihood ratio under shift, affects only autonomy, while safety is
delegated entirely to the verified backstop and is independent of all of it. This is why the
certificate may be distribution-free and assumption-light without that lightness ever reaching
the safety argument. The statistical layer is permitted to be approximate precisely because
the verified layer is not, and the two are composed so their failure modes never overlap. That
composition, rather than any single guarantee, is what we regard as the transferable lesson.

\subsection{Limitations}
\label{sec:limits}
We state the scope of the contribution as a set of boundaries on a solid result rather than as
a list of caveats. (i) \emph{The coverage is marginal.} The headline guarantee
(Thm.~\ref{thm:marginal}) is marginal over exchangeable faults, validated by mean coverage
over splits rather than per-fault or worst-case, and a deadline calibrated on a fault mix can
under-cover a hard sub-class (Rem.~\ref{rem:marginal}). We treat this as a property to measure
rather than to hide, reporting the reversal-class coverage explicitly and supplying the
Mondrian deadline (Thm.~\ref{thm:mondrian}) as the per-class remedy. (ii) \emph{Shift coverage
needs the likelihood ratio.} Coverage under deployment shift holds with a known or
well-estimated likelihood ratio (Thm.~\ref{thm:weighted}), and because a runtime deadline must
be fixed before the next fault is seen, we validate the batch deadline's shifted coverage
empirically rather than claim the idealized per-point bound (Rem.~\ref{rem:batch}). The
pendulum's exact-ratio shift is a controlled illustration in which the ratio is known by
construction, whereas flight deployment requires estimating it from telemetry, with the
effective sample size reported as the honesty check on how much the shift has cost. (iii)
\emph{The certificate is per controller and per fault class.} A recovery-deadline certificate
characterizes one \emph{deployed} controller's recovery distribution on a given fault class,
and a different controller or a materially different fault class has its own certificate that
must be calibrated on its own held-out episodes. None of these boundaries touches the safety
argument, which rests entirely on the verified backstop (Prop.~\ref{prop:safety}) and holds
for every episode regardless of $\alpha$, of the shift, or of whether exchangeability holds at
all. The limitations bound what the \emph{autonomy} guarantee promises, never the safety one,
which is the separation the design is built to preserve.

\section{Conclusion}
\label{sec:conclusion}
The latching rule that runtime assurance inherited from the diverging-controller case is a
suppression pathology for a capable online-adapting controller, which is unsafe by design
during a bounded recovery transient. Identifying that pathology is the first half of the
contribution, and it is a discovery about runtime assurance rather than a defect of any
shield, since it follows structurally from deciding to switch on present state alone when the
controller's competence shows up as a transient breach. The conformal recovery-deadline
certificate is the second half. It replaces
the latch, and the hand-tuned margins practitioners substitute for it, with a
distribution-free, finite-sample upper bound on recovery time that licenses
delaying fallback with a coverage guarantee, while a verified backstop holds the safety floor
unconditionally, and it reports an honest $+\infty$ when no finite deadline can be justified.
Autonomy is governed by statistics, safety by verification, and the two are composed so their
failure modes never meet. The same
pathology and the same cure on two unrelated plants is what makes the construction a
domain-general mechanism rather than a single-system trick, and it is what gives us
confidence that the recovery-deadline certificate is a reusable component for any
Simplex-style system pairing an online-adapting controller with a verified backstop.

\subsection{Reproducibility} Every number in the prose, tables, and figures derives from two
committed evidence files, \texttt{evidence/\allowbreak{}program/\allowbreak{}conformal\_rta.json} (Domain~1, in
Paper~A's reproduced manifest) and \texttt{evidence/\allowbreak{}conformal\_cert/\allowbreak{}pendulum\_domain.json}
(Domain~2, deterministic, regenerable via \texttt{python -m conformal\_cert.domains.pendulum}).
Tables~\ref{tab:spacecraft}--\ref{tab:pend} and
Figures~\ref{fig:calibration}--\ref{fig:engagement} are emitted from those files by
\texttt{tables.py} and \texttt{figures.py}, so the manuscript's quantitative claims are
manifest-reproducible.


\begin{thebibliography}{99}\small
\bibitem{Sha2001} L.~Sha. ``Using simplicity to control complexity.'' \emph{IEEE Software}, 2001.
\bibitem{Hook2016} J.~G.~Fuller, L.~Hook, N.~Hutchins, K.~N.~Maleki, M.~A.~Skoog. ``Toward run-time assurance in general aviation and unmanned aircraft vehicle autopilots.'' \emph{IEEE/AIAA Digital Avionics Systems Conference (DASC)}, 2016.
\bibitem{ASTMF3269} ASTM F3269-21. ``Standard practice for methods to safely bound behavior of aircraft systems containing complex functions using run-time assurance.''
\bibitem{Alshiekh2018} M.~Alshiekh, R.~Bloem, R.~Ehlers, B.~K\"onighofer, S.~Niekum, U.~Topcu. ``Safe reinforcement learning via shielding.'' \emph{AAAI}, 2018.
\bibitem{Ames2019} A.~D.~Ames, S.~Coogan, M.~Egerstedt, G.~Notomista, K.~Sreenath, P.~Tabuada. ``Control barrier functions: theory and applications.'' \emph{European Control Conference (ECC)}, pp.\ 3420--3431, 2019.
\bibitem{Phan2020} D.~T.~Phan, R.~Grosu, N.~Jansen, N.~Paoletti, S.~A.~Smolka, S.~D.~Stoller. ``Neural Simplex architecture.'' \emph{NASA Formal Methods (NFM)}, 2020.
\bibitem{Dunlap2022} K.~Dunlap, M.~Hibbard, M.~L.~Mote, K.~L.~Hobbs. ``Comparing run time assurance approaches for safe spacecraft docking.'' \emph{IEEE Control Systems Letters}, 6:1849--1854, 2022.
\bibitem{Tabbara2025} I.~Tabbara, Y.~Yang, H.~Sibai. ``Statistically assuring safety of control systems using ensembles of safety filters and conformal prediction.'' arXiv:2511.07899, 2025.
\bibitem{Kwon2025} M.~Kwon, T.~Ingebrand, U.~Topcu, L.~Feng. ``Adaptive shielding for safe reinforcement learning under hidden-parameter dynamics shifts.'' arXiv:2506.11033, 2025.
\bibitem{Vovk2005} V.~Vovk, A.~Gammerman, G.~Shafer. \emph{Algorithmic Learning in a Random World}. Springer, 2005.
\bibitem{Vovk2012} V.~Vovk. ``Conditional validity of inductive conformal predictors.'' \emph{Asian Conference on Machine Learning (ACML)}, PMLR 25:475--490, 2012.
\bibitem{Lei2018} J.~Lei, M.~G'Sell, A.~Rinaldo, R.~J.~Tibshirani, L.~Wasserman. ``Distribution-free predictive inference for regression.'' \emph{Journal of the American Statistical Association}, 113(523):1094--1111, 2018.
\bibitem{Angelopoulos2023} A.~N.~Angelopoulos, S.~Bates. ``Conformal prediction: a gentle introduction.'' \emph{Found.\ Trends Mach.\ Learn.}, 2023.
\bibitem{Tibshirani2019} R.~J.~Tibshirani, R.~Foygel Barber, E.~Cand\`es, A.~Ramdas. ``Conformal prediction under covariate shift.'' \emph{NeurIPS}, 2019.
\bibitem{Champion2016} A.~Champion et al. ``The Kind~2 model checker.'' \emph{CAV}, 2016.
\end{thebibliography}
\end{document}